\numberwithin{equation}{section}
\numberwithin{figure}{section}
\theoremstyle{plain}
\newtheorem{thm}{\protect\theoremname}
  \theoremstyle{definition}
  \newtheorem{defn}[thm]{\protect\definitionname}
  \theoremstyle{plain}
  \newtheorem{lem}[thm]{\protect\lemmaname}
  \theoremstyle{remark}
  \newtheorem*{rem*}{\protect\remarkname}
  \theoremstyle{plain}
  \newtheorem*{thm*}{\protect\theoremname}
  \theoremstyle{plain}
  \newtheorem*{conjecture*}{\protect\conjecturename}
  \theoremstyle{definition}
  \newtheorem*{problem*}{\protect\problemname}
  \providecommand{\conjecturename}{Conjecture}
  \providecommand{\definitionname}{Definition}
  \providecommand{\lemmaname}{Lemma}
  \providecommand{\problemname}{Problem}
  \providecommand{\remarkname}{Remark}
  \providecommand{\theoremname}{Theorem}
\providecommand{\theoremname}{Theorem}
\begin{document}

\title{Braid\emph{ }is undecidable}

\author{Linus Hamilton}
\maketitle
\begin{abstract}
Braid is a 2008 puzzle game centered around the ability to reverse
time. We show that Braid can simulate an arbitrary computation. Our
construction makes no use of Braid's unique time mechanics, and therefore
may apply to many other video games.

We also show that a plausible ``bounded'' variant of Braid lies
within $\mathsf{2}\text{-}\mathsf{EXPSPACE}$. Our proof relies on
a technical lemma about Turing machines which may be of independent
interest. Namely, define a \emph{braidlike }Turing machine to be a
Turing machine that, when it writes to the tape, deletes all data
on the tape to the right of the head. We prove that deciding the behavior
of such a machine lies in $\mathsf{EXPSPACE}$.
\end{abstract}

\section{Introduction}

Many video games have recently been proven $\mathsf{NP}$-complete,
$\mathsf{NP}$-hard, $\mathsf{PSPACE}$-complete, and more \cite{DBLP:journals/corr/abs-1203-1895,DBLP:journals/corr/abs-1201-4995,DBLP:journals/corr/abs-1202-6581}.
However, to this author's knowledge, no one has ever proven a video
game to be outside of $\mathsf{PSPACE}$, much less formally undecidable.
This is likely due to Savitch's Theorem, which asserts that any video
game that can be simulated in polynomial space can also be solved
in polynomial space. Most commercial video games yield to Savitch's
Theorem. However, Braid does not, for reasons we explain in Section
3.

Before delving into the computational complexity of Braid, we will
explain the important game elements, along with the unique time-rewinding
mechanic.

\section{A Guide to Braid}

\subsection{Important Elements}

\begin{center}
\includegraphics[scale=0.5]{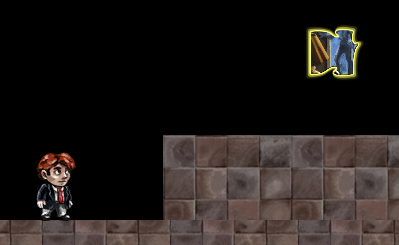}
\par\end{center}

This is Tim. The ledge in front of him shows his maximum jump height.
On top of the ledge is a puzzle piece. The goal of the game is to
collect it.

\begin{center}
\includegraphics[scale=0.5]{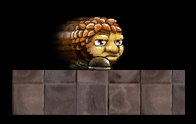}
\par\end{center}

This is a ``monstar''. It behaves like a Goomba from Super Mario
Bros. It walks forward, falls off ledges, and turns around when it
bumps into something. Tim can bounce on it to jump higher.

\begin{center}
\includegraphics[scale=0.5]{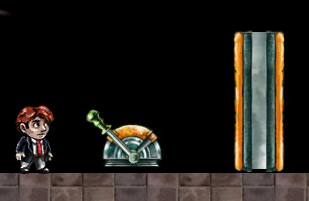}
\par\end{center}

This is a lever and a platform. Tim can pull levers in order to control
platforms in various ways. They are useful for gadgets where we need
a door to open and close.

It is possible to set a platform so that it rises when Tim pulls a
lever, and falls back down after it hits something. We use this behavior
many times to build multi-use gadgets.

\begin{center}
\includegraphics[scale=0.5]{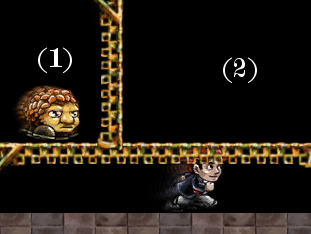}
\par\end{center}

These are one-way surfaces. Tim can jump to (2) onto the surface,
but cannot fall back down. The monstar will walk from (1) to (2),
but cannot return.

\begin{center}
\includegraphics[scale=0.5]{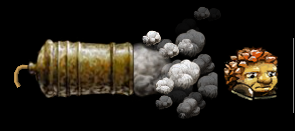} 
\par\end{center}

This is a cannon. It continually shoots out one of monstars, bunnies,
fireballs, or clouds. In this paper, we use only monstar cannons and
bunny cannons.

\begin{center}
\includegraphics[scale=0.5]{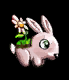}
\par\end{center}

This is a bunny. The only thing you need to know about it is that
monstars can bounce off of bunnies, killing them. We use this behavior
in order to separate one monstar from a large crowd of monstars.

\subsection{Rewinding}

The player can hold the Shift key to rewind time, undoing his mistakes
and even reverting his own death. Ordinarily, this mechanic would
make a game more merciful for human players, but not alter its computational
complexity. However, in Braid, some objects live outside of time.
These objects are unaffected by time manipulation. A time-immune bunny,
for example, will keep hopping forward, even when the entire world
around it is moving backwards through time. This mechanic forms the
basis for all of Braid's unique puzzles.

After rewinding time, Tim can ``play it forward'' again. This works
analogously to the ``undo'' and ``redo'' buttons in a computer
program. Here is an example:

\begin{center}
\includegraphics[scale=0.5]{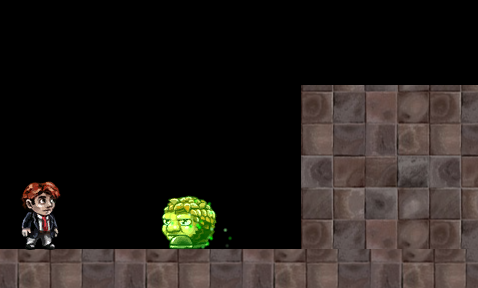}
\par\end{center}

Here, the monstar is sparkling green, indicating that it lives outside
of time.

Tim can jump on the monstar, killing it and bouncing up to the upper
ledge. After that, Tim can rewind time, rewinding himself back down
to the lower ledge. In the computer program analogy, this is like
pressing ``undo''. The monstar does not reappear: it lives outside
of time, and is therefore permanently dead.

At this point, Tim can ``play time forward''. Tim redoes his jump,
even though there is no longer anything to jump off of, and returns
to the upper ledge. This is like pressing ``redo''.

However, if the player rewinds time, and then releases the Shift key
to start controlling Tim again, the ``redo'' future is deleted forever.
Analogously, in a computer program, if you ``undo'' something and
then take another action, then the ``redo'' option is lost forever.

\section{Braid's Computational Complexity: A Gentle Introduction}

Before delving into our main results, we will demonstrate that Braid
is $\mathsf{PSPACE}$-hard.

\begin{center}
\includegraphics[scale=0.33]{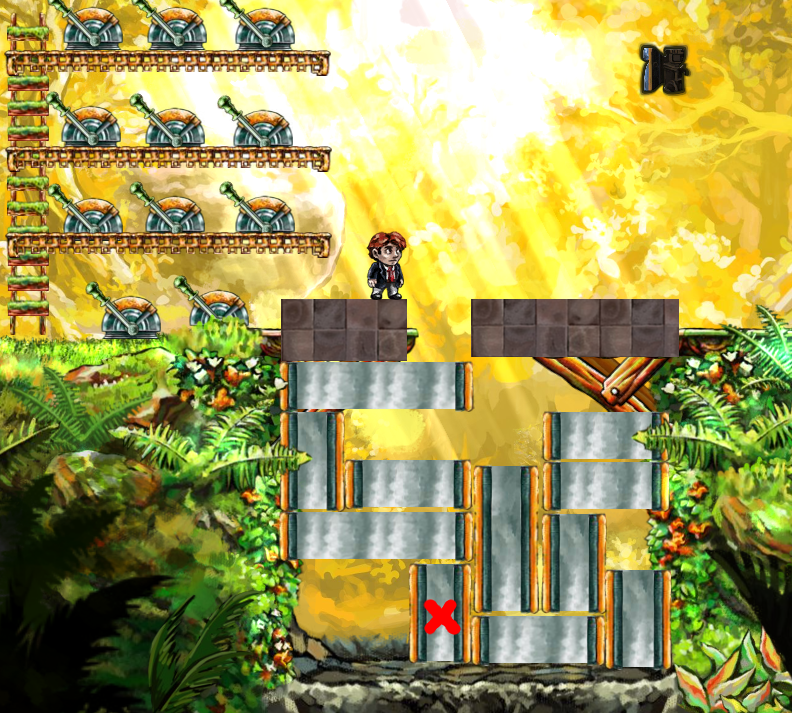}
\par\end{center}

Braid contains levers which can control platforms. Using this, we
can simulate a variant of Rush Hour, as in the picture above. Tim
must beat the Rush Hour puzzle, releasing the marked platform, in
order to jump on it and reach the puzzle piece.

This variant of Rush Hour is $\mathsf{PSPACE}$-hard \cite{Flake_rushhour},
so Braid is too.

For most video games, the article would end here. Savitch's Theorem
states that, if a video game can be simulated using polynomial memory,
then it can be solved in $\mathsf{PSPACE}$. This logic places almost
every video game in $\mathsf{PSPACE}$.

However, this logic does not apply to Braid. In order to simulate
Braid, one might need an arbitrarily large amount of memory. The game
has cannons which can spawn an arbitrary number of enemies. In addition,
Tim can rewind to any previous point in time, so the game must keep
track of the entire previous timeline.

This opens up a rather intriguing possibility. Concievably, one might
use Braid's rewind data as the tape of a Turing machine. By rewinding
time and playing it back, Tim could act as the Turing machine's head.
Perhaps one can use this to simulate an arbitrary computation.

As it turns out, Braid \emph{can} simulate an arbitrary computation
-- but not because of time manipulation.

\section{Braid is undecidable}

In this section, we will prove that Braid is formally undecidable.
To do this, we will demonstrate how to use Braid to simulate a type
of machine called a \emph{counter machine}. Counter machines are one
of the simplest types of machines known to exhibit Turing-complete
behavior.

\subsection{What are Counter Machines?}

A counter machine consists of several counters, each of which holds
a non-negative integer. The machine also contains a program to manipulate
these counters. For the purposes of this paper, a program has three
types of operation:

$ $

1. ``Add''. Adds one to the value of the specified counter.

2. ``Subtract and Branch''. Subtracts one from the value of the
specified counter, if it was not 0. If the counter's value was 0,
goto a specified line of the program instead.

3. ``Halt''.

$ $

A counter machine is extremely simple. Surprisingly, however, a counter
machine with only three counters is Turing-complete \cite{counters}.

\subsection{Overview of the proof}

Our goal is to simulate a counter machine in Braid. To do this, each
counter will be simulated by a stack of overlapping monstars. The
machine's program will be executed by Tim himself. To beat the level,
Tim must run along a course and pull all the levers he encounters.
Each lever performs either an ``Add'' operation or a ``Subtract
and Branch'' operation to a specific counter. In order to allow gotos,
parts of the course may loop back to earlier parts of the course.

In Section 4.3, we design the gadgets that make all this possible.

\subsection{Gadgets}

\subsubsection*{Lever Pull Gadget}

\begin{center}
\includegraphics[scale=0.5]{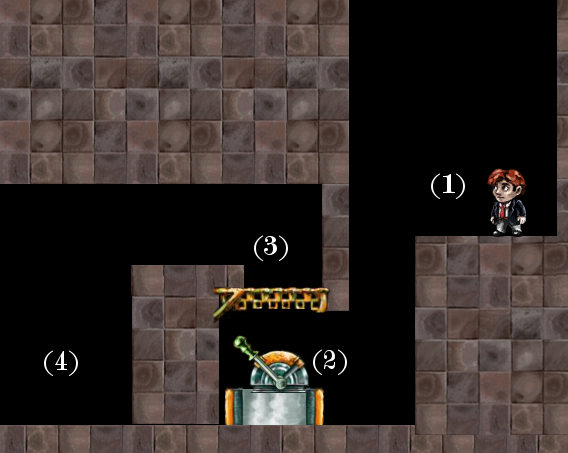}
\par\end{center}

This gadget forces Tim to pull a lever exactly once to progress.

Tim starts at (1). His only option is to fall to (2) and pull the
lever.

When Tim pulls the lever, the platform below it moves upward, lifting
Tim to (3). It moves at such a fast rate that Tim has no time to pull
the lever again.

There is a one-way surface at (3) which can only be passed in the
upward direction. Therefore, once at (3), Tim's only option is to
proceed to (4) and exit the gadget.

The platform is set so that, after it reaches (3), it falls back down
to its original position. Therefore, this gadget may be traversed
as many times as necessary.

$ $

In Braid, levers can control more than one platform, and platforms
can be controlled by more than one lever. We will use this gadget
to force Tim to control platform machinery in distant parts of the
level.

\subsubsection*{Crossover Gadget}

As with any computational complexity analysis of a 2D video game,
we need a Crossover Gadget. This has no purpose in the actual construction;
it is merely to ensure that we can force Tim and monstars down whatever
complicated paths we want to, despite being confined to a 2D plane.

In Braid, a Crossover Gadget is simple to construct, both for Tim
and for monstars.

\begin{center}
\includegraphics[scale=0.5]{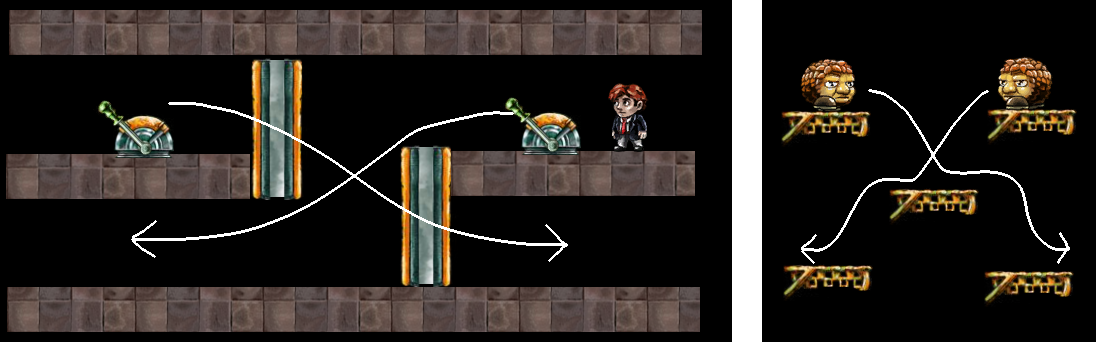}
\par\end{center}

Both levers in Tim's crossover gadget toggle both platforms between
up and down.

Finally, here is a Crossover Gadget to cross Tim's path with a monstar's
path. It works because Tim is too tall to fit in tight spaces.

\begin{center}
\includegraphics[scale=0.33]{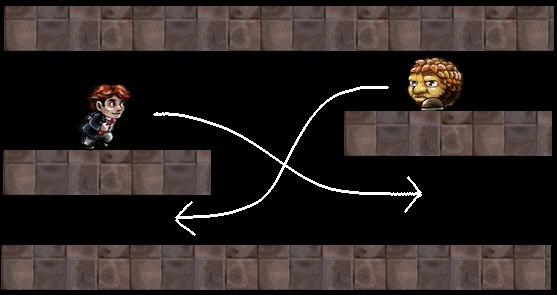}
\par\end{center}

When using this last gadget, one must take care that Tim cannot arrive
at the gadget at the same time as a monstar and interfere with its
path. For example, one can make Tim's path to the gadget very long,
so that if a monstar is coming, it will always arrive first.

\subsubsection*{Counter Gadget}

The counter is a bit complicated, and composed of multiple sub-gadgets.
We will describe the machinery one part at a time.

\begin{center}
\includegraphics[scale=0.5]{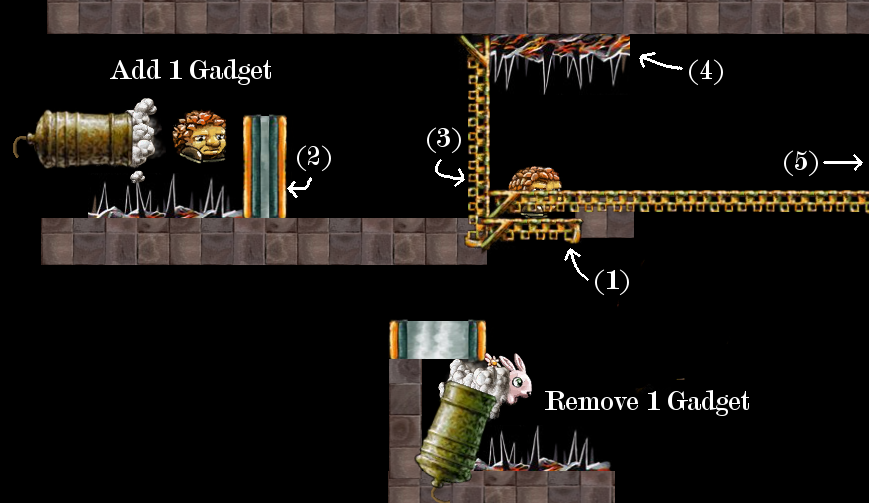}
\par\end{center}

\emph{Normal operation: }Many monsters are trapped walking back and
forth on surface (1). When $n$ monstars are trapped there, this means
the current value of the counter is $n$.

If monstars could jump, then they could jump onto the slightly higher
surface and walk to (5). Fortunately, they cannot -- but this point
will be important later.

$ $

\emph{Add 1 Gadget: }A cannon (on the left side of the image) constantly
shoots out monstars. They all bounce off the platform at (2) and die
on the spikes below.

The gadget activates when Tim pulls a lever that controls the platform
at (2). When this happens, the platform at (2) rises for just long
enough to let one monstar go under it, and then falls back into place.

Every time this happens, the one monstar that passes the gadget walks
through the one-way wall at (3) and gets trapped at (1). Therefore,
whenever Tim activates the Add 1 Gadget, the value of the counter
is increased by 1.

$ $

\begin{center}
\includegraphics[scale=0.5]{counter_gadget}
\par\end{center}

\begin{center}
{\footnotesize{(The same picture again, so you don't have to keep
scrolling between pages)}}
\par\end{center}{\footnotesize \par}

\emph{Remove 1 Gadget: }A cannon (at the bottom of the image) constantly
shoots out bunnies. Similarly to the Add 1 Gadget, they all bounce
off the platform and die on the spikes below.

The gadget activates when Tim pulls a lever that controls its platform.
When this happens, the platform rises for just long enough to let
one bunny pass, and then falls back into place. When this happens,
the bunny shoots up to (1).

In Braid, monstars bounce off of bunnies, killing them. When this
gadget activates, if there are any monsters at (1), the bunny shoots
into the monstars' feet. This kills the bunny, and bounces exactly
one monstar onto the slightly higher platform. This one monstar then
leaves the counter, walking away to (5).

If there are no monstars in the counter, then the bunny will shoot
up into the spikes at (4) and die.

The overall effect of this gadget is: when it is activated, if the
value of its counter is not zero, then one monstar is removed from
the counter and sent down a separate path.

$ $

This completes the description of the Counter Gadget.

Note: in the Remove 1 Gadget, it is important that the removed monstar
is kept alive and sent down a separate path. We will use it later
with the Branch Gadget, below.

\subsubsection*{Branch Gadget}

The goal of this gadget is to force Tim down one of two different
paths, depending on whether there is a monstar in the gadget.

\begin{center}
\includegraphics[scale=0.5]{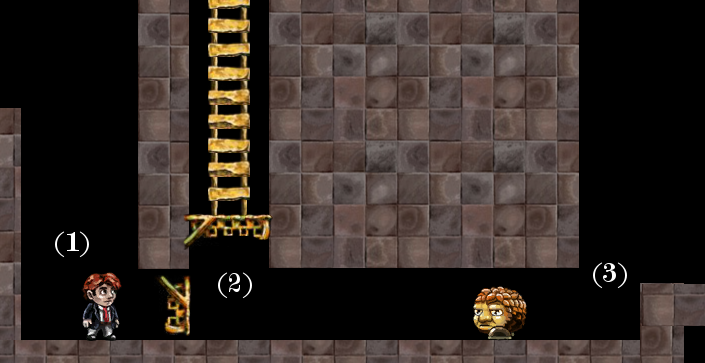}
\par\end{center}

Tim enters the gadget at (1). The platform and ladder above (2) are
too high for him to jump to. So, if there is no monstar in the gadget,
then his only option is to exit at (3).

If there is a monstar in the gadget (as is the case in the picture),
then Tim cannot walk to (3), because the monstar is blocking his way.
The only way for Tim to proceed is to jump on the monstar, killing
it. The only place where he has enough vertical space to do this is
at (2). When he jumps on the monstar at (2), he bounces high enough
to land on the one-way platform above (2). After this, his only option
is to exit the gadget by climbing the ladder.

Therefore, depending on whether there is a monster in the gadget or
not, Tim is forced to exit either up the ladder, or at (3). This completes
the Branch Gadget.

\subsection{Assembling the Counter Machine}

Now that we have our gadgets, we will describe how to use them to
simulate a counter machine with three counters.

First of all, place three Counter Gadgets, far away from each other.
These will be our counters.

Now, construct a large course for Tim to walk through, consisting
of Lever Pull Gadgets and Branch Gadgets. The structure of the course
determines the program which the counter machine will run. For example,
if we want our program to start with ``Add 1 to counter \#2'', then
the first room of the course will be a Lever Pull Gadget whose lever
connects to the Add 1 Gadget inside the 2$^{\text{nd}}$ Counter Gadget.

This handles ``Add'' operations. What about ``Subtract and Branch
if 0'' operations? The ``Subtract'' part is easy: force Tim through
a Lever Pull Gadget whose lever connects to a counter's Remove 1 Gadget.

Suppose Tim activates this gadget. Further suppose the counter is
not 0. Then upon activating the gadget, one monstar will exit the
counter. To perform the ``Branch if 0'' part of the operation, we
would like to guide this monstar into the appropriate Branch gadget.

To do this, build a room that catches all monstars leaving the counter.
This room should consist of a flat floor for the monstar to walk back
and forth across, along with many platforms forming trap doors in
the floor.

Next, force Tim through a second Lever Pull Gadget, whose lever opens
one of these trap doors. Build a path below this trap door, leading
the monstar to the appropriate Branch gadget.

Finally, direct Tim to this Branch gadget. Make sure to make Tim's
path long enough that he cannot beat the monstar to the gadget. Depending
on whether the counter was nonzero or zero, he will or will not find
a monstar in the Branch gadget, and he will be forced down one of
two different paths.

This completes the ``Subtract and Branch if 0'' operation.

We have handled ``Add'' operations and ``Subtract and Branch''
operations. All that remains is the ``Halt'' operation. This is
the simplest operation to simulate: just build a room that contains
the end goal, e.g. a puzzle piece and an exit door.

$ $

We have shown how to completely simulate a counter machine inside
Braid. As noted earlier, such a counter machine is Turing-complete.
Therefore, Braid\emph{ }is Turing-complete. This proves the desired
result.
\begin{thm}
Deciding whether a given Braid level is solvable is as hard as the
halting problem. $\square$
\end{thm}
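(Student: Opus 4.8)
The plan is to exhibit a computable reduction from the halting problem for three-counter machines to the Braid-solvability problem, and then invoke the Turing-completeness of three-counter machines \cite{counters}. Given a three-counter machine $M$, I would assemble the Braid level $L(M)$ exactly as in Section 4.4: place three Counter Gadgets to hold the counters, and build Tim's course out of Lever Pull Gadgets and Branch Gadgets whose wiring encodes the lines of $M$'s program, terminating in a room containing the puzzle piece precisely at the simulated ``Halt'' instruction. Since the number of gadgets is linear in the size of $M$, and their connections are purely mechanical, $L(M)$ is computable from $M$.

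It then suffices to prove that $L(M)$ is solvable if and only if $M$ halts. The forward direction is the easy one: if $M$ halts, Tim walks his course, pulling each lever in the order dictated by the (unique, since $M$ is deterministic) halting run of $M$. This faithfully drives the counters up and down through the Add 1 and Remove 1 sub-gadgets and routes Tim through each Branch Gadget along exactly the path the counter machine takes, finally depositing him in the goal room.

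The backward direction is the main obstacle, and is where essentially all of the work lives. I must prove a gadget-by-gadget soundness guarantee: each gadget constrains Tim and the monstars so tightly that the only way to make progress is to carry out precisely the intended counter-machine step. Concretely, I would establish, by induction on Tim's play, the invariants that (i) a Lever Pull Gadget can be cleared only by pulling its lever exactly once; (ii) the number of monstars trapped at location (1) of each Counter Gadget always equals the current value of the corresponding counter, so the Add 1 and Remove 1 operations behave as advertised and a monstar leaves the counter on a subtraction exactly when the counter is nonzero; and (iii) a Branch Gadget forces Tim down the ``zero'' or ``nonzero'' branch according to whether a monstar has been routed into it. The delicate points are the timing and interference conditions flagged in the construction: I must verify that Tim's paths can always be made long enough that a released monstar reaches its Branch Gadget before Tim does, and that Tim can never slip past or interfere with a monstar in a Crossover Gadget, so that the simulation cannot be corrupted.

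A final subtlety I would address explicitly is that the real game of Braid permits time rewinding, even though the construction never uses it. I must confirm that rewinding grants Tim no illicit solving power. Because every object in the level (monstars, bunnies, and platforms) is an ordinary in-time object, rewinding uniformly undoes Tim's actions together with all of their consequences, exactly as in the ``undo/redo'' discussion of Section 2.2; hence it can only let Tim retry some prefix of play and can never produce a winning position unreachable by a straight, rewind-free sequence of moves. Granting these invariants, $L(M)$ is solvable if and only if $M$ reaches its ``Halt'' line, which completes the reduction and therefore the theorem.
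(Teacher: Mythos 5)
Your proposal follows essentially the same route as the paper: a reduction from the halting problem for three-counter machines (Turing-complete by \cite{counters}) to Braid solvability, built from the Counter, Lever Pull, and Branch Gadgets and assembled exactly as in Section 4.4. The paper's own proof is just this informal gadget-by-gadget construction; your additions --- the explicit soundness invariants, the timing conditions, and the check that rewinding (with every object living inside of time) gives Tim no extra power --- only make explicit what the paper leaves implicit.
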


\section{``Bounded Braid'' is decidable}

In the introduction, we conjectured another way that Braid could be
undecidable. Namely, one could use rewind data to store large amounts
of information. Concievably, one might use Braid's rewind data as
the tape of a very large Turing machine.

Such a construction would be more satisfying that the counter machine
simulation, because it relies on Braid's time manipulation, the puzzle
mechanic that makes Braid unique.

Surprisingly, it turns out that such a strategy cannot work. In Theorem
2, we prove that if we bound Braid's gameplay to eliminate counter
machine strategies, then Braid becomes decidable.
\begin{thm}
\emph{Suppose we bound Braid's gameplay so that at most $2^{n}$ game
elements can exist at once, and all the action takes place inside
an $n$ pixel by $n$ pixel box. Then the problem of determining whether
a level is beatable is in 2-$\mathsf{EXPSPACE}$. That is, it is solvable
in $2^{2^{poly(n)}}$ space.}\end{thm}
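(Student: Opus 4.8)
The plan is to recast ``is this level beatable?'' as a reachability question in a finite (albeit enormous) graph of game configurations, and then appeal to Savitch's Theorem, exactly as the introduction foreshadows.

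First I would bound the size of a single \emph{instantaneous} game state, i.e.\ a snapshot of one frame. Under the stated hypotheses there are at most $2^{n}$ game elements, each of which lives at one of $n^{2}$ pixel positions and carries only a constant amount of additional data (its type, velocity, facing direction, whether it is alive, the height of a platform, the position of a lever, and so on). Thus a snapshot is described by $2^{n}\cdot\mathrm{poly}(n)=2^{O(n)}$ bits, and the number of distinct snapshots is at most $N:=2^{2^{O(n)}}$. Since the physics is deterministic given Tim's input each frame, and Tim has only a constant number of control options per frame, forward play is a deterministic transition on snapshots with constant out-degree.

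The subtlety is the rewind mechanic, so next I would enlarge the notion of state to a full \emph{configuration}. Because Tim may rewind to any earlier moment, a configuration must record the entire timeline of the time-manipulable world --- a sequence $\tau^{(0)},\dots,\tau^{(L)}$ of time-affected snapshots --- together with a pointer to the current moment and the current state of the \emph{time-immune} objects, which keep advancing even while everything else rewinds. The only move that lengthens the timeline is forward play at its tip; rewinding, redoing, and branching never increase $L$. I claim it suffices to consider timelines of length at most $N$. Granting this, a configuration occupies $N\cdot 2^{O(n)}+\log N+2^{O(n)}=2^{2^{O(n)}}=2^{2^{\mathrm{poly}(n)}}$ bits, and the transition relation (one frame of forward play, rewind, redo, or a fresh branch, for each of Tim's constantly many inputs) is computable within that same space bound. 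With configurations encoded in $M=2^{2^{\mathrm{poly}(n)}}$ bits, beatability is precisely the question of whether some \emph{winning} configuration (one in the room holding the puzzle piece) is reachable from the start. A nondeterministic machine decides this by guessing the play one frame at a time, storing only the current configuration and a step counter bounded by the number of configurations, i.e.\ an $M$-bit counter; this uses $O(M)$ space. Hence beatability lies in $\mathsf{NSPACE}(M)$, and by Savitch's Theorem in $\mathsf{DSPACE}(M^{2})=2^{2^{\mathrm{poly}(n)}}$ space, which is $\mathsf{2}\text{-}\mathsf{EXPSPACE}$.

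The main obstacle is the deferred claim that timelines never need to exceed length $N$. The intended argument is a pumping argument: a timeline reaching length $L$ is built up moment by moment by forward play, and at the frame that first creates each $\tau^{(m)}$ the combined instantaneous moment $(\tau^{(m)},\iota_{m})$ --- time-affected snapshot paired with the concurrent time-immune snapshot --- is one of only $N$ possibilities. If $L>N$ then two such combined moments coincide, and I would excise the intervening segment of play to obtain a shorter, equivalent winning run whose maximal timeline is strictly shorter; iterating drives $L$ down to $N$. Making this rigorous is delicate precisely because of Braid's signature mechanic: while the world rewinds, the time-immune objects keep evolving and may collide with the retracing world, so the excision must be shown to preserve not merely the current snapshot but the entire future behavior under all later rewinds and redos. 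I expect this bookkeeping --- verifying that the combined instantaneous moment is the correct invariant on which to pump --- to be the crux of the argument.
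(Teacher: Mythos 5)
Your reduction to reachability over configurations and the final appeal to Savitch's Theorem match the paper's skeleton, and your accounting of configuration sizes is fine \emph{granted} your deferred claim. But that deferred claim --- that timelines never need to exceed the number $N$ of combined instantaneous moments --- is not a bookkeeping step to be checked at the end: it is the entire technical content of the result (the paper devotes its whole second half, on ``braidlike'' Turing machines and their ``tour guides'', to exactly this), and the invariant you propose to pump on is the wrong one. Suppose $(\tau^{(m_1)},\iota_{m_1})=(\tau^{(m_2)},\iota_{m_2})$ and you excise the play between the (final) creations of cells $m_1$ and $m_2$. The surviving suffix of the winning run may later rewind \emph{below} the duplicated moment; during such a rewind the time-immune objects evolve by interacting with every replayed cell they pass over, so their state upon reaching any cell below $m_1$ depends on the \emph{contents of all the intervening cells} --- precisely the cells your excision removed. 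Equality of the single boundary snapshot and the instantaneous time-immune state says nothing about this, so the excised run can diverge from the original and the pumping never gets started. The invariant actually needed is a functional summary of the \emph{entire prefix} of the timeline: for each state of the time-immune world entering a rewind excursion at that boundary, the set of states in which it can re-emerge (or the information that it never does). This is the paper's ``tour guide,'' and it is why the paper's cell bound is the number of such summary functions, $2^{poly(2^{poly(n)})}$, rather than the number of snapshots.

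There is a second, independent failure mode caused by nondeterminism (the player's choices) that excision cannot handle even with the right invariant: the winning run, after creating the second duplicate, may eventually \emph{take control below it} (a write, which deletes it) and from then on interact directly with the cells lying between the two duplicates --- cells that simply do not exist in your excised run. The paper's proof deals with this by having each tour guide record its ``destiny'' (whether it will ever be destroyed, and the last state in which the head sits immediately to its right) and then \emph{splicing non-contiguous segments} of the original run ($\mathcal{X}_1$, then $\mathcal{X}_3$, then $\mathcal{X}_5$ in the paper's notation) instead of cutting out a single interval. So your overall architecture is sound, and the $2^{2^{poly(n)}}$ bound would indeed follow once a doubly exponential timeline bound is established; but the pumping argument as you propose it would fail, and repairing it leads precisely to the paper's tour-guide machinery rather than to a verification of your snapshot invariant.
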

\begin{proof}
Note that, with these bounds on Braid's gameplay, there are only $2^{poly(n)}$
game states for any given timestep.

We will model the entire game as a special kind of Turing machine.

In this Turing machine, the tape will record the rewind data. To be
specific, the symbol at cell $i$ in the tape contains the data for
all the time-dependent (i.e. living inside of time) objects that exist
on timestep $i$.

The head of the Turing machine will track the current time. If the
player is currently viewing timestep $i$, whether by normal gameplay
or by rewinding, then the head will be at cell $i$. The state of
the head contains the data for all the time-immune objects that currently
exist.

This is a non-deterministic Turing machine. The player may control
both their normal game movements (like running and jumping), and the
movement of the head (by rewinding time or playing it forward)%
\footnote{In Braid, the player can rewind time or play it forward at multiple
different speeds, ranging from 0 (time is paused) to 8 (rapid rewinding).
Since 8 is a constant, it is easy to modify the Turing machine construction
appropriately.%
}. The player's possible choices, combined with the game's physics
engine, produce the nondeterministic state transition table for the
Turing machine.

This Turing machine has one very special property. \emph{When the
head writes a symbol, all data to the right of that symbol is erased.}
Why does this happen? Because, as stated in Section 2.2, if Tim ``undoes''
something and then makes another action, the option to ``redo''
is lost forever.

Let us make this more precise. Our Turing machine writes a symbol
exactly when it is recording rewind data, i.e, when the player is
not traveling through time by holding the Shift key. As stated in
Section 2.2, when the player leaves time traveling mode, the rewind
data for Tim's future is deleted. Braid's rewind data corresponds
exactly to the symbols on the machine's tape. Therefore, whenever
our Turing machine writes a symbol, all data for the future, i.e.
all data to the right of the head, is deleted.

This Turing machine perfectly simulates Braid. As one final addition,
we will add a target state, $T$, to our machine. The machine enters
state $T$ when Tim achieves the goal of the level, i.e. collecting
all the puzzle pieces and entering the exit door.

The Braid level is solvable if and only if our Turing machine can
reach state $T$. Therefore, we have reduced Braid to the problem
of deciding whether a special type of Turing machine can reach a specified
target state.

For this theorem, we assume that only $2^{n}$ objects can exist at
once, and that the action must take place within an $n$-by-$n$ grid.
With these bounds, there are only $2^{poly(n)}$ possible game states
for any given timestep. Therefore, our Turing machine has $2^{poly(n)}$
states and $2^{poly(n)}$ symbols.

We will defer the meat of the proof, a technical result about these
data-erasing Turing machines, to Theorem 4 in the next section. The
desired result, that Braid is in 2-$\mathsf{EXPSPACE}$, follows.
\end{proof}

\section{Braidlike Turing machines are decidable}
\begin{defn}
A \emph{braidlike} Turing machine is a Turing machine with the following
special property: whenever the head writes a symbol, all data to the
right of that symbol is erased.\end{defn}
\begin{thm}
Consider the following decision problem: we are given a nondeterministic
braidlike Turing machine with $N$ states and $S$ symbols. We want
to determine whether the machine can reach a specified ``target state''
$T$. This decision problem is solvable in $2^{poly(N)}\log^{2}(S)$
space.
\end{thm}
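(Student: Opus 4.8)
The plan is to recast the whole question as reachability in a \emph{finite} configuration graph and then invoke Savitch's Theorem. A configuration is a triple $(q,h,w)$, where $q \in Q$ is the current state, $w \in \Sigma^{*}$ is the tape contents (a finite string, since everything to the right of the last written cell is blank), and $h$ is the head position within $w$. Each configuration has at most $NS$ out-neighbours, and the neighbour relation is computable in space $O(\log(NS))$, so the graph is generated on the fly and never stored. The only obstacle to finiteness is that $w$ can in principle grow without bound, so the entire theorem reduces to one combinatorial claim: \emph{if the target state $T$ is reachable at all, then it is reachable by a computation whose tape never exceeds some length $L = 2^{poly(N)}$ that depends only on the number of states, not on $S$.} Granting this, the number of relevant configurations is at most $N \cdot L \cdot S^{L}$, whose logarithm is $O(L \log S) = 2^{poly(N)} \log S$, and Savitch's Theorem decides reachability to $T$ in space $O(\log^{2}(\#\text{configs})) = 2^{poly(N)} \log^{2}(S)$, exactly the claimed bound.

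So the real work is the length bound, and this is where the braidlike property must be used. I would run the classical \emph{crossing-sequence} argument: for the boundary between tape cells $i$ and $i+1$, record, in time order, the sequence of states in which the head crosses that boundary. The decisive structural point is that a crossing sequence is a word over the \emph{state} alphabet $Q$, never over $\Sigma$ -- this is precisely why the final bound will be independent of $S$. If two distinct boundaries ever display the same crossing sequence, I would splice out the intervening block of tape, reconnecting the two halves of the computation crossing-for-crossing to obtain a shorter accepting computation. Hence in a computation of minimal tape length all crossing sequences are distinct, so the tape length is at most the number of crossing sequences of length at most $c$, namely $L \le N^{c+1}$, where $c$ is the maximum crossing-sequence length.

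Two points require the braidlike hypothesis. First, the usual splicing lemma is sound only when the two boundaries expose identical tape data to the reconnected halves; in a braidlike machine this bookkeeping collapses, because whenever the head writes while standing at or to the left of a boundary it \emph{erases} the entire region to the right. Consequently, at the instant the head re-enters the right region it finds it blank, so its future behaviour is a function of its state alone and not of any accumulated right-hand data. This is exactly what lets crossing sequences be treated as pure state-words and makes the splice valid. Second, I must bound the crossing-sequence length $c$ itself: between two crossings of the same boundary in the same state, with the right region in the same (reset) condition, the head runs a loop that can be deleted, so in a minimal computation $c$ is at most the number of distinct ``state plus reset-status'' situations the head can present at a boundary, which I claim is $poly(N)$. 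Feeding $c = poly(N)$ into $L \le N^{c+1}$ yields $L = 2^{poly(N)}$, single-exponential in $N$ and free of $S$, as required.

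I expect the main obstacle to be exactly this second point: pinning the crossing-sequence bound to $poly(N)$ rather than letting it balloon into a tower of exponentials (note that $N^{c+1}$ with $c = 2^{poly(N)}$ would already be doubly exponential, overshooting the target). The danger is that the ``situation'' at a boundary secretly encodes information about deeper nested excursions, inflating $c$. The braidlike erasure is what keeps it flat: an excursion to the right that ends in a write is amnesiac about whatever the right region previously held, so the relevant state of affairs at a boundary is captured by a bounded summary over $Q$. As a cross-check on soundness, one can repackage these summaries recursively -- the net effect of the head venturing right of a boundary and returning is an entry-state-to-exit-state relation on $Q$ (together with a flag recording whether $T$ is reachable en route), of which there are only $2^{N^{2}+1} = 2^{poly(N)}$; these relations compose as the head moves, and the braidlike reset guarantees the defining fixpoint bottoms out rather than depending on unbounded right-hand context. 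Either packaging delivers the length bound $L = 2^{poly(N)}$, and with it, through Savitch's Theorem, the desired $2^{poly(N)} \log^{2}(S)$ space algorithm.
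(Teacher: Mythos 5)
Your high-level frame is the same as the paper's endgame: prove that if $T$ is reachable at all, it is reachable with tape length $L = 2^{poly(N)}$ \emph{independent of $S$}, then run Savitch's Theorem on the configuration graph to get $2^{poly(N)}\log^{2}(S)$ space. The gap is in how you get the length bound. Your estimate $L \le N^{c+1}$ is worthless unless the crossing-sequence length satisfies $c \le poly(N)$, and your argument for that bound --- ``between two crossings of the same boundary in the same state, with the right region in the same reset condition, the head runs a loop that can be deleted'' --- is invalid. Two such crossings are \emph{not} equal configurations: the tape content to the \emph{left} of the boundary may have changed in between (indeed, writes in the left region are exactly the events that reset the right region). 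Deleting the intervening steps alters the left content on which the remainder of the run may depend, so the truncated run need not be a valid computation reaching $T$. A time-loop deletion requires the full configuration to repeat, and full configurations include the left content, with $S^{\Omega(L)}$ possibilities --- which reintroduces precisely the $S$-dependence you are trying to banish. Worse, the claim $c \le poly(N)$ is not merely unproven but almost certainly false: minimal runs of braidlike machines can be far longer than $poly(N)$ (the paper's own deterministic lemma budgets a step counter of $2^{poly(N)}\log(S)$ bits because runs can last on the order of $S^{2^{poly(N)}}$ steps before any configuration repeats, and the paper conjectures braidlike halting is $\mathsf{EXPSPACE}$-complete, which would force doubly exponential minimal runs). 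Any boundary crossed super-polynomially often makes $N^{c+1}$ doubly exponential or worse --- exactly the overshoot you flagged as the main obstacle, and which your proposal does not overcome.

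The repair --- and the paper's actual method --- is to pigeonhole on boundaries carrying \emph{bounded-size functional summaries} rather than unbounded crossing sequences. The paper stations a ``tour guide'' at each boundary recording the effect of a leftward excursion as a map from entry states to sets of outcomes (return in state $X$, accept, reject, loop forever, or ``you will destroy me by writing''), together with the state at her creation and, in the nondeterministic case, her ``destiny'' (whether she survives, and if not, the last state in which the head stands to her right). There are only $2^{poly(N)}$ such summaries \emph{no matter how long the computation runs}, and the combinatorial core of the proof shows that a shortest run reaching $T$ never has two identical tour guides coexisting: the segment between their creations can be excised, with the destiny data being what makes the excision valid under nondeterminism. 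This bounds the tape length by $2^{poly(N)}$ with no bound on crossing numbers needed at all. Your closing ``cross-check'' paragraph gestures at such summaries, but you attach them to \emph{rightward} excursions, whose entry-to-exit relation genuinely depends on the right-hand content except just after a reset, so as stated it does not close the gap; the summaries must describe excursions to the left, where the braidlike property guarantees that the summary either stays valid or is destroyed. Separately, even your splicing step needs the crossing sequences augmented with erasure flags (whether a write occurred left of the boundary during each excursion), since equal state-sequences alone do not guarantee the glued-together right part experiences the same erasure pattern.
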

In order to motivate the proof of Theorem 4, we will first prove two
easier theorems.

For the theorems in this section, we assume all Turing machines to
be half-infinite. In other words, the tape has a left endpoint, and
extends infinitely far to the right. We number the cells 0, 1, 2,
..., et cetera. This is only for our personal convenience; it is not
hard to modify all of these proofs to work for a doubly infinite tape.

Here is our first ``warm-up'' lemma:
\begin{lem}
A read-only Turing machine has decidable behavior.\end{lem}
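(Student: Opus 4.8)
The plan is to argue that a read-only Turing machine, though it has infinitely many possible tape contents, can only ever occupy finitely many meaningful configurations when we care about its halting (or target-reaching) behavior. The key observation is that a read-only machine never modifies the tape, so its entire dynamics are governed by the pair (current state, current head position) together with the fixed tape contents. Since the tape is never written to, the only thing that matters about the tape is what symbols the head reads as it moves, and the head position combined with the state fully determines the next move.

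First I would make precise what ``read-only'' means here: the head may move left and right and read symbols, but the write operation is a no-op (or, equivalently, every transition writes back the symbol it just read). Consequently the tape contents are frozen at their initial value. Next I would set up the notion of a \emph{configuration} as a pair $(q, p)$ where $q$ is the state and $p \in \{0, 1, 2, \dots\}$ is the head position, noting that the transition out of $(q,p)$ depends only on $q$ and on the symbol sitting at cell $p$. The crucial step is then a pigeonhole/cycle-detection argument: I would show that the machine's behavior becomes eventually periodic, so that to decide whether it reaches the target state $T$ we need only simulate it for a bounded number of steps before detecting a repeat.

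\begin{sloppypar}
The main obstacle is that the head position $p$ ranges over an infinite set, so the configuration space $(q,p)$ is not finite a priori, and a naive ``simulate until a configuration repeats'' argument does not immediately terminate. To handle this I would argue that the head can be assumed to explore only a bounded region: if the machine runs long enough without halting, then either it repeats a state-position pair (yielding a loop we can detect), or the head marches off monotonically in one direction. In the latter case, once the head moves past all the cells that were ever written with nontrivial data --- and for a read-only machine, past the finitely many cells of the original input --- it sees only the blank symbol, so its subsequent behavior depends purely on its state. Thus beyond the input region the machine behaves like a finite automaton reading an infinite stream of blanks, which is eventually periodic in the state alone. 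I would combine these cases to bound the total number of distinct ``essential'' configurations, so that a terminating simulation with cycle detection decides whether state $T$ is ever reached.
\end{sloppypar}
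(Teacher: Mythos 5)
Your route is genuinely different from the paper's. The paper never performs a global configuration-space analysis; instead it introduces ``tour guides'': whenever the head moves from cell $i$ to cell $i+1$, a summary function is installed at that boundary which, for each state, reports the state in which the head would next re-emerge at cell $i+1$ if it tried to move left (or reports ACCEPT, REJECT, or ``loop forever''). Each summary is computable from the summary one cell to its left together with the symbol at cell $i$, so the summaries are built inductively, and they convert the machine into one that never moves left, which is trivially decidable. That compositional device is really the point of the lemma: tour guides are exactly the object that survives when writes are allowed in Lemma 6 and Theorem 4 --- where your fixed-tape argument breaks down, since the tape is no longer frozen --- and they also immediately yield the paper's remark that read-only machines accept only regular languages. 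Your argument (pigeonhole on $(q,p)$ configurations over the input region, plus the observation that in the blank region the dynamics depend on the state alone and are eventually periodic) is more elementary and perfectly viable for this one lemma, but it is a dead end for the rest of the paper.

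There is, however, one step you should not wave at: termination of your simulation. In the drift-to-infinity case no configuration $(q,p)$ ever repeats, so plain cycle detection never fires; and the natural fix of collapsing all blank-region positions into one ``essential'' cell is unsound, because two visits to the blank region in the same state but at different depths can have different futures (from a shallow position the head may fall back into the input region; from a deep one it may not). Two correct repairs: (i) show that any excursion into the blank region either returns to the input region within $poly(N)$ steps or never returns at all --- this follows from your own periodicity observation, since after a pre-period of at most $N$ steps the state cycle has a fixed net drift per period, and the sign of that drift settles the matter; or (ii) detect times $t_1<t_2$ at which the head sits at blank cells $p_1<p_2$ in the same state without having gone left of $p_1$ in between, whereupon the intervening segment can be translated and pumped forever. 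Either patch, combined with the bound of $N(n+1)$ on input-region configurations (so only finitely many excursions can occur before a repeat is forced), gives a terminating procedure. Relatedly, ``marches off monotonically'' should be weakened to ``eventually stays in the blank region and drifts rightward'': the escape need not be monotone, and finitely many returns to the input region may precede it.
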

\begin{proof}
WLOG, the Turing machine head starts at cell 0. We assume that the
tape contains a finite string starting at cell 0 and ending at cell
$n$, and that the rest of the tape is blank.

Whenever the machine head moves from cell $i$ to cell $i+1$, we
will employ a ``tour guide'' to stand between the two cells. The
tour guide's job is to save the machine head the hassle of moving
left. Whenever the head attempts to move from cell $i+1$ back to
cell $i$, the tour guide will interrupt. She will say, ``Eventually,
you will go back to cell $i+1$, and the first time you do, you will
be in state $X$.'' The machine head then moves back to cell $i+1$
in state $X$, having skipped many tiresome steps of calculation.

(Note: a tour guide may also (1) immediately halt the Turing machine
and either ACCEPT or REJECT, or (2) inform the Turing machine that
it will loop forever.)

It is possible to compute the behavior of any given tour guide. In
fact, in order to compute the behavior of the tour guide between cells
$i$ and $i+1$, we need only know (1) the behavior of the tour guide
immedately to her left, and (2) the symbol on cell $i$.

These tour guides ensure that the machine head never moves to the
left. Such a machine is obviously decidable.\end{proof}
\begin{rem*}
An easy modification of this proof shows that read-only Turing machines
can only parse regular languages.

Now we prove a slightly more involved lemma.\end{rem*}
\begin{lem}
A deterministic braidlike Turing machine is decidable. In particular,
if the machine has $N$ states and $S$ symbols, and starts on a blank
tape, then we can determine its behavior in $2^{poly(N)}\log S$ space.\end{lem}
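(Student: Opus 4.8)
The plan is to generalize the ``tour guide'' technique from the read-only lemma proved above so that it survives the presence of writes. As before, I associate to each boundary, between cells $i$ and $i+1$, a \emph{guide} $g_i$: a finite summary of everything the head can do while it is confined to the left region $\{0,1,\dots,i\}$. Concretely, $g_i$ is the function that takes the state $s$ in which the head crosses leftward into cell $i$ and returns the outcome of the ensuing excursion --- either the state $s'$ in which the head eventually crosses back rightward out of cell $i$, or the verdict \textsc{halt} (accept or reject), or the verdict \textsc{loop}. Since such a function has a domain of size $N$ and a range of size $N+3$, there are only $(N+3)^{N}=2^{O(N\log N)}=2^{poly(N)}$ possible guides, so a guide can be stored in $poly(N)$ bits, \emph{independent of the alphabet size $S$}. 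This is the source of the $2^{poly(N)}$ factor in the bound, while the $\log S$ factor will come from naming individual symbols.

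The new difficulty, absent in the read-only case, is that an excursion into the left region may \emph{write}, and a single write can change the content of that region and hence its guide; the dependence between ``current content'' and ``current behavior'' threatens to be circular. This is exactly where the braidlike hypothesis saves us: the instant the head writes at some cell $j$, every cell to the right of $j$ is erased to the fixed, known blank symbol, so the content to the right of the most recent write is never arbitrary --- it is canonical. In stack terms, a write at cell $j$ \emph{collapses the frontier} down to $j$, deleting the record of every cell above $j$ and matching the erase operation exactly. I therefore model the computation as a stack machine whose stack holds, for each cell from $0$ up to the current frontier $f$, that cell's current symbol ($\log S$ bits) together with the guide of the prefix ending there ($poly(N)$ bits). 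A rightward step at the frontier pushes a fresh blank cell, recomputing the finalized cell's guide from the guide one level down and the symbol just written; a leftward step at the frontier is fast-forwarded through the guide one level down; and a write during such an excursion pops the stack down to the write position.

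The remaining task is to bound the stack depth, since a priori the frontier, and hence the space used, is unbounded. Here I use a pigeonhole argument on \emph{first-arrival configurations}. Because the frontier (the largest cell ever visited) is non-decreasing, each cell $m$ is visited for the first time exactly once, in some state $s_m$ and with the left region summarized by some guide $g_{m-1}$; moreover, since everything at and beyond the frontier is blank, the entire future of the computation from that instant depends only on the pair $(s_m,g_{m-1})$. There are only $N\cdot 2^{poly(N)}=2^{poly(N)}$ such pairs. Consequently, if the frontier ever surpasses this many cells, some first-arrival configuration $(s,g)$ must recur at two positions $m_1<m_2$; determinism then forces the stretch of behavior that carried the head from $m_1$ to $m_2$ to repeat verbatim, shifted one period to the right, forever, so the machine provably \textsc{loops} and never halts. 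We may therefore cap the frontier at $2^{poly(N)}$, declaring \textsc{loop} if it is exceeded. Under this cap the stack has depth at most $2^{poly(N)}$ and each entry costs $poly(N)+\log S$ bits, for a total of $2^{poly(N)}\log S$ space, as claimed.

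I expect the main obstacle to lie in the second paragraph: rigorously defining the guides and proving the local recursion ``$g_i$ is computable from $g_{i-1}$ and the symbol at cell $i$'' in the presence of internal writes. One must argue that the guide is well-defined (the confined computation always reaches a determinate verdict), that it genuinely depends only on the summary $g_{i-1}$ rather than on the full content below --- which is precisely where the braidlike erasure, by canonicalizing everything to the right of a write, does the real work --- and that the frontier-collapse caused by a write is faithfully mirrored by popping the stack. Once this bookkeeping is in place, the pigeonhole bound on the frontier and the resulting space accounting are routine.
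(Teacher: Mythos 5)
Your overall architecture is the paper's own: summarize each left region by a finite ``guide'' at its right boundary, pigeonhole on the $2^{poly(N)}$ possible (state, guide) pairs to conclude that a machine travelling too far right must loop, then simulate on a capped tape in $2^{poly(N)}\log S$ space. But the step you defer to the end --- defining the guides ``in the presence of internal writes'' --- is not bookkeeping; it is the crux of the lemma, and your definition cannot support the rest of your argument. Your guide's range is $\{$return state, halt, loop$\}$, i.e.\ $N+3$ outcomes. This omits an outcome that genuinely occurs: the excursion into $\{0,\dots,i\}$ performs a write at some cell $j\le i$ (erasing everything to the right of $j$, including cells $j+1,\dots,i$ and the whole suffix beyond $i$) and only afterwards, possibly, returns. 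Such an excursion cannot be recorded as ``returns in state $s'$'': it has side effects, so the stored symbols and guides above $j$ become stale, and your stack machine --- which fast-forwards through excursions precisely so that it never simulates them step by step --- has no way to learn that a pop is required. This also breaks well-definedness and the recursion ``$g_i$ is computable from $g_{i-1}$ and the symbol at cell $i$.'' The paper's fix is exactly one extra response, ``before you return here, you will destroy me''; upon receiving it the simulation abandons fast-forwarding and runs that excursion concretely on the stored symbols until the write occurs, then pops. With this response added, a ``return'' answer certifies the excursion is write-free, the summaries stay sound, and the count $(N+4)^N N=2^{poly(N)}$ still goes through.

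There is a second gap, in the pigeonhole itself. You apply it to \emph{historical} first-arrival pairs $(s_m,g_{m-1})$ and assert that ``the entire future of the computation from that instant depends only on the pair.'' Once writes exist this is false: the future is determined by the pair only until the guide $g_{m-1}$ is destroyed by a write below $m$; after that, the computation depends on the actual left-region contents, which the (lossy) guide does not record. So equal first-arrival pairs at $m_1<m_2$ do not force looping: between the two arrivals the machine may write below $m_1$, rebuild \emph{different} deep content that happens to induce the \emph{same} guide function, and then, after the next destruction following $m_2$, behave differently (even halt). The paper instead pigeonholes over guides that are \emph{simultaneously alive}: when the head first reaches cell $M$, every boundary $0,\dots,M-1$ carries a live guide (after the last rightward crossing of a boundary the head never returns left of it before reaching $M$, so no write can destroy that guide). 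Two identical live guides, Alice on the left and Bob on the right, certify that Alice was never destroyed between her creation and Bob's, so the segment between the two creations is determined by the shared data and repeats verbatim after Bob, forever. Your capped procedure is in fact correct, but its correctness rests on this simultaneity argument, not on the first-arrival argument you give; as written, both the simulation mechanism and the loop-detection lemma have holes at exactly the point where the braidlike erasure must be confronted.
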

\begin{proof}
As in the previous lemma, we use tour guides. Again, a tour guide
is created whenever the machine head moves from cell $i$ to cell
$i+1$. The tour guides behave as in the previous lemma, with four
differences.

(a) In this lemma, the Turing machine starts on a blank tape.

(b) When the machine head writes a symbol, not only is all data to
its right erased, but all the tour guides to its right are destroyed.

(c) When the head moves from cell $i+1$ to cell $i$, the tour guide
between the cells still interrupts. She now has one new possible response:
``Eventually, before visiting cell $i+1$ again, you will destroy
me.'' After this response, the head moves to cell $i$, and continues
calculating as if the tour guide didn't exist.

(d) Each tour guide also remembers one more piece of information:
the state that the machine head was in when she was first created.

Suppose that at some point, two identical tour guides exist. Call
the one further to the left Alice, and the one further to the right
Bob. By ``identical'', I mean that Alice and Bob would give identical
responses when the machine head moves past them in any state $X$,
and also that they remember the same information with regards to point
(d).

Consider the time period starting at Alice's creation. By assumption,
the machine started on a blank tape. Therefore, when Alice was created,
the entire tape to the right of Alice was blank. Also, since Alice
has not yet destroyed, the machine head never went to the left of
Alice. (That is, whenever it tried to go to the left of Alice, Alice
always interrupted it and sent it back.)

Now, we can draw out a timeline of the machine's operation:

$\bullet$ It performed a series of steps $\mathcal{X}_{1}$, ending
with Alice's creation.

$\bullet$ It performed a series of steps $\mathcal{X}_{2}$, never
going to the left of Alice, ending in Bob's creation.

But Alice and Bob are identical! Therefore, to the machine head, the
situation after $\mathcal{X}_{1}$ looks identical to the situation
after $\mathcal{X}_{2}$: there is a tour guide identical to Alice
immediately to its left, and a blank tape stretching out to infinity
to its right. The machine is even in the same state, by point (d).

Therefore, after creating Bob, the machine will perform the same series
of steps $\mathcal{X}_{2}$ again. And after that, it will create
a third identical tour guide, and repeat $\mathcal{X}_{2}$ yet again.
This process will repeat forever. Therefore, the machine will loop
forever, never entering any states it has not entered before.

Now, we will use the Pigeonhole Principle. How many possible tour
guides are there? Each tour guide is essentially a map from the $N$
states of the machine to all $N+4$ possible responses (one response
for each state, in addition to ``ACCEPT'', ``REJECT'', ``loop
forever'', and ``destroy me''). Each tour guide also remembers
one of $N$ possible initial states (see point (d), above). Thus,
there are $(N+4)^{N}N=2^{poly(N)}$ different tour guides.

Therefore, if the Turing machine ever reaches cell $(N+4)^{N}N+1$,
there is no need to continue: we know that two identical tour guides
must exist, so the machine must loop forever. So, we can efficiently
simulate this Turing machine using only $2^{poly(N)}$ cells. This
requires $2^{poly(N)}\log(S)$ space. By also storing a ``total number
of steps'' counter of size $2^{poly(N)}\log(S)$ bits, we can detect
infinite loops within this space. Therefore, we can decide in space
$2^{poly(N)}\log(S)$ the complete behavior of this Turing machine.
\end{proof}
We're almost done. All that remains is to add nondeterminism to the
previous lemma. We now restate and finally prove Theorem 4.
\begin{thm*}
A nondeterministic braidlike Turing machine has decidable behavior.
In particular, if the machine has $N$ states and $S$ symbols, and
starts on a blank tape, then we can decide whether it can reach a
given target state $T$ in $2^{poly(N)}\log^{2}(S)$ space.\end{thm*}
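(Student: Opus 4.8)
The plan is to generalize the tour-guide argument from the deterministic case (Lemma 6) to the nondeterministic setting. The core difficulty is that a single "tour guide" is no longer well-defined: when the head moves right from cell $i$ to cell $i+1$ in the deterministic case, the future behavior of the machine on the region to the right is completely determined, so the guide can promise a unique response. In the nondeterministic case, from any given state $X$ the head may have several possible continuations, leading to several possible return states (or to ACCEPT/REJECT/loop/destroy). So I would redefine a tour guide to be a \emph{relation} rather than a function: for each state $X$ in which the head crosses leftward, the guide records the \emph{set} of possible outcomes (the set of states in which the head can next cross back rightward, together with the flags for whether ACCEPT, REJECT, destruction, or an infinite loop is reachable along the way). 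Each guide still also records (per point (d)) the state in which it was created.

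The second step is to redo the pigeonhole/looping argument with these relational guides. I would bound the number of distinct guides: each guide is now a map from the $N$ states to subsets of the $N+4$ outcomes, together with one of $N$ creation states, giving at most $\left(2^{N+4}\right)^{N} N = 2^{\mathrm{poly}(N)}$ possibilities. The looping argument carries over almost verbatim: if two identical guides Alice and Bob ever coexist, then the configuration the head sees just after creating Bob (identical guide to the left, blank tape to the right, same creation state) is indistinguishable from the configuration just after creating Alice, so \emph{every} nondeterministic continuation available now was available then. Hence the head can never reach cell $(N+4)\cdot 2^{N} \cdot N + 1$ — wait, more carefully, cell number exceeding the number of distinct guides — without two identical guides coexisting, and this bounds the usable tape length by $2^{\mathrm{poly}(N)}$.

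The main obstacle, and where the extra $\log S$ factor in the space bound comes from, is \emph{computing} these relational guides. In the deterministic case each guide was computed by a short local simulation. Here, computing the guide between cells $i$ and $i+1$ amounts to deciding a nondeterministic reachability question on the region to the right, whose own tour-guide structure depends on the guide to its left and on the symbol written at cell $i$. I would organize this as a reachability computation over the space of configurations, where a configuration is specified by the head position (at most $2^{\mathrm{poly}(N)}$ cells), the current state (one of $N$), and the contents of the bounded tape segment. Since each of the $2^{\mathrm{poly}(N)}$ cells holds one of $S$ symbols, describing the full tape costs $2^{\mathrm{poly}(N)} \log S$ space, and running a Savitch-style or reachability search over this configuration graph — recursively halving the path length — incurs one further factor of $\log$ of the configuration-count, contributing the second $\log S$. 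The upshot is that the whole decision can be carried out in $2^{\mathrm{poly}(N)} \log^{2}(S)$ space, and I expect the bookkeeping to make the two logarithmic factors precise to be the most delicate part of the writeup.
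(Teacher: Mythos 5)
Your relational tour guides are exactly the paper's modification (a), and your final space accounting is also essentially the paper's: once the usable tape is bounded by $2^{poly(N)}$ cells, the machine can be simulated nondeterministically in $2^{poly(N)}\log S$ space, and a Savitch-style determinization over the configuration graph squares this into $2^{poly(N)}\log^{2} S$. The genuine gap is at the step you wave through with ``the looping argument carries over almost verbatim.'' It does not, and that is precisely the hard part of the nondeterministic case. In the deterministic lemma, two identical coexisting guides force the machine to repeat $\mathcal{X}_{2}$ forever, so one concludes it loops. A nondeterministic machine is not forced to repeat anything; the correct statement is that a \emph{shortest} computation reaching $T$ never needs two identical coexisting guides. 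Your argument for this --- every continuation available after creating Bob was already available after creating Alice --- is valid only for continuations that keep the head (virtually) to the right of Bob forever. If the continuation ever uses Bob's ``destroy me'' response and crosses to his left, the interchangeability breaks down: the world to the left of Bob (which includes Alice and all the cells between them) is not the world to the left of Alice, so that continuation cannot be replayed with Alice in Bob's role.

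This is why the paper adds a second piece of data that you omitted (its point (b)): at creation, each guide is told her destiny --- either ``you will survive forever,'' or ``you will be destroyed, and the last time the head is in the cell immediately to your right it will be in state $X$.'' With destinies folded into the definition of ``identical,'' the redundancy argument splits into two cases. If Alice and Bob both survive forever, your interchangeability argument applies and the steps $\mathcal{X}_{2}$ between their creations can simply be cut. If both are destroyed with the same last-visit state $X$, one instead splices timelines: perform $\mathcal{X}_{1}$ (creating Alice), replay $\mathcal{X}_{3}$ (the steps from Bob's creation to the last visit of the cell on his right, which never go left of Bob and hence replay verbatim next to Alice), and then continue with the original $\mathcal{X}_{5}$ (the steps following the last visit to the cell on Alice's right, which do cross to Alice's left and therefore must be taken from Alice's own timeline, where the tape to her left is the real one). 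The destiny state $X$ is exactly what glues $\mathcal{X}_{3}$ to $\mathcal{X}_{5}$; without it, two guides with identical response sets can have different last-visit states and no splicing is possible, so your pigeonhole bound on the tape length has no proof. Note the fix is cheap: recording the destiny multiplies the number of distinct guides by only a $poly(N)$ factor, so the $2^{poly(N)}$ bound you computed survives.
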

\begin{proof}
Again, we use tour guides. There are two changes to the tour guides'
behavior.

(a) Instead of giving a single response (like ``you will return to
cell $i+1$ in state $X$''), the tour guide now gives a set of possible
responses (like ``you can return to cell $i+1$ in state $X_{1}$,
$X_{2}$, ..., or $X_{n}$; or you can destroy me, or you can loop
forever.'') This captures the machine's nondeterminism.

(b) When a tour guide is created, the machine head is required to
inform it of its destiny. The machine head must tell it either (1)
``You will survive forever'', or (2) ``I will destroy you, and
the last time I am in the cell immediately to your right, I will be
in state $X$.''

Now we show that it is never necessary for two identical tour guides
to exist. Again, by ``identical'', I mean that the tour guides would
give identical responses to a machine head in any state $X$, and
also that they remember the same information with regards to point
(d) in the previous lemma and point (b) in this lemma.

Suppose that at some point during the machine's operation, two identical
tour guides exist, and later the machine reaches the target state
$T$. Then I claim that the machine could have reached $T$ in strictly
fewer steps.

As in the previous lemma, pick a time when two identical tour guides
exist. Name the one further to the left Alice, and the one further
to the right Bob.

\uline{Case 1}: Alice and Bob both survive forever. Then, as in
the previous lemma, the situation after creating Alice is identical
to the situation after creating Bob. Therefore, the steps between
creating Alice and creating Bob were unnecessary.

\uline{Case 2}: Alice and Bob will both be destroyed, and the last
times the machine head is in the cells immediately to their right,
it will be in state $X$. Then we can map out the machine's operation,
like so:

$\bullet$ Perform a series of steps $\mathcal{X}_{1}$, ending with
Alice's creation.

$\bullet$ Perform a series of steps $\mathcal{X}_{2}$, ending with
Bob's creation.

$\bullet$ Perform a series of steps $\mathcal{X}_{3}$, ending immediately
to the right of Bob in state $X$ for the last time.

$\bullet$ Perform a series of steps $\mathcal{X}_{4}$, ending immediately
to the right of Alice in state $X$ for the last time.

$\bullet$ Perform a series of steps $\mathcal{X}_{5}$, ending in
the target state $T$.

Is this optimal? No! I claim that the machine should have applied
the $\mathcal{X}_{3}$ strategy immediately after creating Alice.
Why is $\mathcal{X}_{3}$ a valid series of steps immediately after
creating Alice? Because, while performing $\mathcal{X}_{3}$ in the
above timeline, the machine never goes to the left of Bob. (That is,
whenever it tried to go to the left of Bob, Bob sent it back to the
right.) Therefore, we can use the exact same logic as in the previous
lemma. That is, to a machine head who wants to perform $\mathcal{X}_{3}$,
the situation immediately after creating Alice looks the same as the
situation immediately after creating Bob. 

Why is this important? Because the machine could have performed $\mathcal{X}_{1}$,
then $\mathcal{X}_{3}$, then $\mathcal{X}_{5}$, and reached state
$T$ in strictly fewer steps.

In both Case 1 and Case 2, having two identical tour guides is redundant.
This proves the claim.

Therefore, we can assume that the machine never has two identical
tour guides. Again, we proceed by Pigeonhole.

How many tour guides are there? Each tour guide is essentially a map
from the $N$ states to all $2^{poly(N)}$ possible sets of responses.
There are $\left(2^{poly(N)}\right)^{N}=2^{poly(N)}$ such maps. Each
tour guide also knows about her creation and her destiny, but this
only contributes a $poly(N)$ factor to the number of tour guides.
Therefore, there are $2^{poly(N)}$ possible tour guides.

Therefore, it suffices to use $2^{poly(N)}$ cells, i.e. $2^{poly(N)}\log(S)$
space, to simulate this nondeterministic braidlike Turing machine.
Therefore, the decision problem ``Can the machine reach the target
state $T$?'' is contained in $\mathsf{NSPACE}\left(2^{poly(N)}\log(S)\right)$.
By Savitch's Theorem, this is contained in $\mathsf{SPACE}\left(2^{poly(N)}\log^{2}(S)\right)$.
\end{proof}

\section{Open Questions}

Bounded Braid is somewhere between $\mathsf{PSPACE}$ and 2-$\mathsf{EXPSPACE}$.
Where? I conjecture that it hits the 2-$\mathsf{EXPSPACE}$ ceiling.
In particular:
\begin{conjecture*}
Bounded Braid is complete for the problem of simulating an (exponentially
larger) braidlike Turing machine.
\end{conjecture*}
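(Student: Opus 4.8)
The plan is to establish completeness by proving two reductions, one of which is essentially already in hand. For the direction that places Bounded Braid \emph{inside} the target problem, I would observe that the braidlike Turing machine produced in the proof of Theorem 2 is itself \emph{succinctly described}: although it has $2^{poly(n)}$ states and $2^{poly(n)}$ symbols, its transition relation is nothing more than a single step of Braid's physics engine applied to the $n$-by-$n$ grid, together with one of Tim's allowed inputs. Since the grid has only $poly(n)$ cells and Braid's per-timestep dynamics are local, a one-step update is computable by a circuit of size $poly(n)$. Thus a size-$n$ Bounded Braid instance maps, in polynomial time, to a $poly(n)$-bit description (a transition circuit $C$, an initial configuration, and the target state $T$) of an exponentially larger braidlike machine, and the level is solvable exactly when that machine reaches $T$. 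This is the membership half of ``complete for''.

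The substance of the conjecture is the reverse reduction: given a succinctly described braidlike TM $M$ specified by a $poly(n)$-size transition circuit $C$ with $N,S = 2^{poly(n)}$ and target $T$, I would build a Bounded Braid level of size $poly(n)$ that is beatable if and only if $M$ reaches $T$. The construction reuses the exact dictionary of Theorem 2: Braid's rewind data is $M$'s tape, the time-\emph{immune} objects hold $M$'s head state, playing time forward after an action is a \emph{write} (which, by the redo-erasure rule of Section 2.2, destroys all future cells, matching the braidlike property \emph{for free}), and rewinding is a leftward move. I would encode each of $M$'s $poly(n)$-bit tape symbols across a block of $poly(n)$ consecutive Braid timesteps, storing one bit per timestep as the presence or position of a time-dependent monstar, and I would hard-wire $C$ into the level geometry using the Boolean-style logic of the Lever Pull, Branch, and Counter gadgets from Section 4. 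One simulated step of $M$ then becomes: rewind to the current block and read it, run Tim through the fixed $poly(n)$-size sub-course computing $C$ on the pair (current symbol, current state), update the time-immune head state, and play time forward to write the new block. Tim's free choices realize $M$'s nondeterminism, and the level's goal is placed in the branch reached exactly when $M$ enters $T$.

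The hard part will be implementing a \emph{non-destructive read}. The tape content lives in time-dependent objects, yet any action Tim takes in order to branch on a stored bit would itself advance the timeline and, by the redo-erasure rule, wipe out the very future cells we are trying to preserve. Reading therefore cannot use an ordinary Branch gadget; instead I would sense the replayed tape with \emph{time-immune} machinery --- for instance a time-immune monstar whose motion is deflected, or not, by the time-dependent bit at each timestep while Tim merely rewinds and replays without acting, so that the monstar's net displacement copies the block's bits into the head state while the redo-future stays intact. Designing this sensing mechanism so that it (i) reads exactly the current block, (ii) never triggers erasure, and (iii) cannot be abused by Tim to leak tape data, skip steps, or desynchronize the head, is the crux, and it is precisely the interaction between in-time and out-of-time objects --- Braid's signature mechanic --- that must be harnessed to do it. A secondary, more routine difficulty is proving the simulation \emph{tight}: one must certify an invariant showing Tim has no strategy other than faithfully executing $M$, ruling out exploits via monstar crowds, bunnies, or stray time manipulation, much as one verifies the gadgets of Section 4 but now across the time axis.

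Combining the two reductions yields the claimed completeness. To confirm that this pins Bounded Braid at the 2-$\mathsf{EXPSPACE}$ ceiling rather than lower, I would finish with a standard succinctness argument: Theorem 4 already places succinct braidlike-TM reachability in 2-$\mathsf{EXPSPACE}$, and an upward-translation (padding) argument lifting the space lower bounds implicit in the tour-guide analysis should show the succinct problem is 2-$\mathsf{EXPSPACE}$-hard, whence Bounded Braid is 2-$\mathsf{EXPSPACE}$-complete.
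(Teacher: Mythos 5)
This statement is one of the paper's open conjectures---the paper offers no proof of it at all---so the only question is whether your proposal actually settles it, and it does not. The membership half is fine and essentially implicit in the paper: the machine built in the proof of Theorem 2 is succinctly described by Braid's local physics, so Bounded Braid reduces to succinct braidlike-TM reachability. But the hardness half, which is the entire content of the conjecture, is left unresolved by your own admission. The ``non-destructive read'' problem you isolate is exactly the obstruction: in the Theorem 2 dictionary, any action Tim takes while time runs forward is a \emph{write} and erases the redo-future, so Tim can never branch on stored tape data without destroying it. Your proposed fix---time-immune sensors that copy bits into the head state while Tim only rewinds and replays---is a research direction, not a construction: you give no gadget, no argument that replayed time-dependent objects can deflect a time-immune monstar in a way Tim can later exploit, and no invariant ruling out Tim cheating. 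Naming the crux is not the same as crossing it; as written, the reduction from a succinct braidlike TM to a Braid level simply does not exist yet.

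There is also a genuine error in your closing paragraph. You claim the succinct problem is $2$-$\mathsf{EXPSPACE}$-hard by ``padding'' from ``the space lower bounds implicit in the tour-guide analysis.'' The tour-guide arguments of Lemma 6 and Theorem 4 are \emph{upper} bounds: they show braidlike-TM reachability lies in $\mathsf{EXPSPACE}$ (hence the succinct version in $2$-$\mathsf{EXPSPACE}$). No lower bound for braidlike machines appears anywhere in the paper---that is precisely the paper's \emph{second} conjecture, that the halting problem for braidlike Turing machines is $\mathsf{EXPSPACE}$-complete, which is also open. So your final step assumes one open conjecture to conclude a consequence of another; even if both of your reductions were carried out, you would have shown Bounded Braid equivalent to the succinct braidlike problem, which is what the conjecture asserts, but your further claim that this pins Bounded Braid at the $2$-$\mathsf{EXPSPACE}$ ceiling would still be unsupported.
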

And more interestingly,
\begin{conjecture*}
The halting problem for braidlike Turing machines is $\mathsf{EXPSPACE}$-complete.
\end{conjecture*}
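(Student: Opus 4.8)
The conjecture has two halves: membership in $\mathsf{EXPSPACE}$ and $\mathsf{EXPSPACE}$-hardness. The first half I would extract for free from Theorem 4. If the braidlike machine is presented as an input of length $n$, then it has $N \le n$ states and $S \le n$ symbols, so the bound $2^{poly(N)}\log^{2}(S)$ from Theorem 4 is at most $2^{poly(n)} \cdot poly(n) = 2^{poly(n)}$ space. Hence deciding whether a braidlike machine halts (equivalently, reaches a designated target state) on a blank tape lies in $\mathsf{SPACE}(2^{poly(n)}) = \mathsf{EXPSPACE}$. All of the difficulty is in the matching lower bound.

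For hardness, the plan is to fix an $\mathsf{EXPSPACE}$-complete problem and reduce it, in polynomial time, to braidlike halting. The cleanest target is the acceptance problem for deterministic Turing machines that run in space $2^{poly(n)}$ (or, equivalently, an exponential corridor-tiling problem), which is $\mathsf{EXPSPACE}$-complete. Given such a machine $M$ and input $x$ with $|x| = n$, I would construct a braidlike machine $B_{x}$ with $poly(n)$ states and $poly(n)$ symbols that reaches its target state if and only if $M$ accepts $x$. Because Theorem 4 already caps a $poly(n)$-state braidlike machine's usable tape at $2^{poly(n)}$ cells, there is in principle enough room to lay out a single configuration of $M$, so the space budget is not the issue; the issue is the dynamics.

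The mechanism I would attempt is a regenerative, append-only simulation that respects the defining erase-to-the-right constraint. The head would commit a prefix encoding the current configuration of $M$ and then advance the simulation by writing the successor configuration strictly to the right of the prefix, reading the committed prefix (reads never erase) and applying $M$'s local transition rule as it streams left to right. The erase-on-write behaviour is harmless as long as the machine only ever extends its frontier and never needs to patch a cell in the interior. The serious problem is that a single configuration of $M$ already fills $2^{poly(n)}$ cells, so the naive ``write the whole history'' scheme blows the tape up to doubly-exponential length, violating Theorem 4's own bound; to stay within budget the machine must reuse space, which forces interior rewrites, which under the braidlike rule annihilate everything the head has built to its right.

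This tension is exactly where I expect the proof to live or die, and it is presumably why the conjecture is still open. The heart of a successful reduction would be a tape layout in which every suffix the head destroys can be cheaply \emph{regenerated} from the surviving prefix --- mirroring, in reverse, the tour-guide regeneration principle that drove the upper bound --- together with an addressing scheme that lets a $poly(n)$-state control coordinate cells $2^{poly(n)}$ apart without maintaining an editable mid-tape counter. Designing such a self-regenerating layout, and proving that its enforced recomputation genuinely captures an arbitrary $2^{poly(n)}$-space computation rather than something strictly weaker, is the main obstacle; indeed, before attempting the reduction one should be prepared for the possibility that the erase-right restriction makes braidlike machines strictly weaker than general exponential-space machines, in which case the conjecture would be false and the effort should instead go toward a sharper upper bound.
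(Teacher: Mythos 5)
This statement is one of the paper's open conjectures: the paper offers no proof of it, so there is nothing to compare your attempt against except what the paper does establish. Your first half is sound and coincides with that: Theorem 4 gives a $2^{poly(N)}\log^{2}(S)$-space decision procedure for whether a braidlike machine reaches a target state, and since a machine presented as an input of length $n$ has $N,S\le poly(n)$ (read off the transition table), halting --- recast as reaching a designated state --- lies in $\mathsf{SPACE}\left(2^{poly(n)}\right)=\mathsf{EXPSPACE}$. That is exactly the membership claim the abstract makes, and it is the only half of the conjecture that anyone, including the author, has proved.

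The second half of your proposal is not a proof, and you say so yourself --- correctly. The obstruction you identify is the genuine one, and it is worth spelling out why it is structural rather than an artifact of your particular layout. An $\mathsf{EXPSPACE}$-hardness reduction must produce a $poly(n)$-state, $poly(n)$-symbol braidlike machine, and the pigeonhole/tour-guide argument inside Theorem 4 (and Lemma 6) shows that any such machine which can reach its target at all can do so using at most $2^{poly(n)}$ cells: beyond that, two identical tour guides exist and the intervening work is redundant. So an append-only simulation that writes successive configurations of the simulated machine ever further to the right is dead on arrival --- a $2^{poly(n)}$-space computation can run for $2^{2^{poly(n)}}$ steps, so the history is doubly exponential --- while any space-reusing simulation must perform interior writes, each of which annihilates the frontier. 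Whether a self-regenerating layout can thread this needle, or whether the erase-right rule genuinely caps braidlike machines below $\mathsf{EXPSPACE}$, is precisely what the conjecture asks; your closing caveat that the conjecture might simply be false is the right scientific posture. Just be clear that what you have written is an accurate problem analysis plus the easy upper bound, not a proof, and that the statement remains open after it --- exactly as it does in the paper.
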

$ $

There is also a more ``open'' open question:
\begin{problem*}
What other video games are undecidable? Our construction in Section
4 should apply to many video games in which one can spawn an unbounded
number of objects, and where it is possible to remove exactly one
of these objects from a crowd.
\end{problem*}

\bibliographystyle{plain}
\nocite{*}
\bibliography{Braid_is_undecidable}

\end{document}